\newtheorem{problem}{Problem}[section]
\newtheorem{lemma}{Lemma}[section]
\newtheorem{theorem}{Theorem}[section]
\newtheorem{corollary}{Corollary}[section]
\newcommand{\diag}{\operatornamewithlimits{diag}}
\title{Formation Control and Network Localization via Distributed Global Orientation Estimation in $3$-D}
\author{Byung-Hun Lee$^{\dag}$ and Hyo-Sung Ahn$^{\dag}$ 
\thanks{$^{\dag}$The authors are with School of Mechatronics,
Gwangju Institute of Science and Technology, Gwangju, Korea.
E-mail: {bhlee@gist.ac.kr}; {hyosung@gist.ac.kr.}}%
}
\begin{document}

\maketitle 
\thispagestyle{empty}
\pagestyle{empty}

\begin{abstract}
In this paper, we propose a novel distributed formation control strategy, which is based on the measurements of relative position of neighbors, with global orientation estimation in 3-dimensional space. Since agents do not share a common reference frame, orientations of the local reference frame are not aligned with each other. Under the orientation estimation law, a rotation matrix that identifies orientation of local frame with respect to a common frame is obtained by auxiliary variables. The proposed strategy includes a combination of global orientation estimation and formation control law. Since orientation of each agent is estimated in the global sense, formation control strategy ensures that the formation globally exponentially converges to the desired formation in 3-dimensional space. 
\end{abstract}


\section{Introduction}
Cooperative control of multiple autonomous agents has attracted a lot of attention due to its practical potential in various applications. Formation control of mobile autonomous agents is one of the most actively studied topics in distributed multi-agent systems. Depending on the sensing capability and controlled variables, various problems for the formation control have been studied in the literature \cite{Oh:Automatica2015}. 

Consensus-based control law for the desired formation shape is known as the displacement-based approaches. In this approach, agents measure the relative positions of their neighbors with respect to a common reference frame. Then, agents control the relative position for the desired formation. In the literature, it has been known that this method achieves global asymptotic convergence of the formation to the desired one \cite{Lin:TAC2005,Ren:IJRNC2007,Dimarogonas:Automatica2008,Oh:TAC2014}. Fundamental requirement for the displacement-based approach is that all agents share a common sense of orientation. Therefore, local reference frames need to be  aligned with each other. Oh and Ahn \cite{Oh:TAC2014} found the displacement-style formation control strategies based on orientation estimation, under the distance-based setup (i.e., local reference frames are not aligned with each other). In this approach, agents are allowed to align the orientation of their local coordinate systems by exchanging their relative bearing measurements. By controlling orientations to be aligned, the proposed formation control law allows agent to control relative positions simultaneously. 
Similar strategy using the consensus protocol is found in \cite{Montijano:ACC2014}. In the paper, control objective is to reach a desired formation with specified relative positions and orientations in the absences of a global reference frame. 
These strategies ensure asymptotic convergence of the agents to the desired formation under the assumption that interaction graph is uniformly connected and initial orientations belong to an open interval $[0,\pi)$\cite{Oh:TAC2014, Montijano:ACC2014}. In our previous work\cite{Lee:Automatica2016}, we proposed a novel formation control strategy via the global orientation estimation. The orientation of each agent is estimated based on the relative angle measurement and auxiliary variables assigned in the complex plane. Unlike the result of \cite{Oh:TAC2014,Montijano:ACC2014}, in this new approach, consensus property, which is based on the Laplacian matrix with complex adjacency matrix, is applied to allow the global convergence of the auxiliary variables to the desired points. In this way, we showed that the global convergence of formation to desired one for the multi-agent system having misaligned frames could be achieved.  

In this paper, we extend the result of \cite{Lee:Automatica2016} to 3-dimensional space. 
Even though we share the philosophy of the algorithm of \cite{Lee:Automatica2016}, there are intrinsic distinctions from \cite{Lee:Automatica2016} as follows. First, in 2-dimensional space, an unique auxiliary variable for each agent is required to estimate the orientation of the local frame. However, for the estimation of orientation in 3-dimensional system, we require more than two auxiliary variables for each agent, since at least two auxiliary variables are required for obtaining the rotation matrix in $\mathrm{SO}(3)$. Second, while the direction of the unique auxiliary variable can be considered as the orientation in 2-dimensional space, the orientation in 3-dimensional space can not be determined directly from the two different directions of auxiliary variables. Thus, we need to have additional processes compared to the case of 2-dimensional space for obtaining the estimated orientation. 
The novel method proposed in this paper uses the consensus protocol with auxiliary variables to estimate a rotation matrix in $\mathrm{SO}(3)$. This consensus protocol achieves global convergence of auxiliary variable, since configuration space of auxiliary variable is a vector space instead of nonlinear space like circle. We show that auxiliary variables construct the rotation matrix that identifies the estimated orientation. For the case of general $n$-dimensional space, the estimation method for unknown orientation is proposed in our previous work\cite{Lee:ICARCV2016}. In this paper, we modify the estimation method for the case of 3-dimensional space, and combine it with the formation control law for the desired formation shape. The estimated direction of the frame is used to compensate misaligned frames of each agent in the formation system. Consequently, the formation control law with the estimated coordinate frame achieves a global convergence of multi-agent system to desired shape in 3-dimensional space, under the distance-based setup. 
Third, we also show that the proposed strategy can be applied to the localization problem of sensor networks equipped with displacement and orientation measurement sensors. We design the position estimation law for the given displacement measurements. Then, the position estimation law is combined with the orientation estimation law. It thus shows that convergence property of localization algorithm is identical to the one of formation control algorithm.


The outline of this paper is summarized as follows. In Section II, preliminaries are described. In Section III, an idea for estimation law is described and stability of estimation law is analyzed. Convergence property of the formation control strategy with orientation estimation is then analyzed in Section IV. By using the proposed method, localization algorithm in networked systems is analyzed in Section V. Simulation results are provided in Section VI. Concluding remarks are stated in Section VII.

\section{Preliminaries}	\label{preliminaries}
In this paper, we use the following notation. Given $N$ column vectors $ x_1$, $x_2 \dots x_N \in \mathbb R^n$, $\bold x$ denotes the stacking of the vectors, i.e. $\bold x =( x_1,   x_2, \dots, x_N) $. The matrix $I_n$ denotes the $n$-dimensional identity matrix. Given two matrices $A$ and $B$, $A \otimes B$ denotes the Kronecker product of the matrices. The \emph{exterior product}, also known as \emph{wedge product}, of two vectors $u$ and $v$ is denoted by $u \wedge v$. 
\subsection{Consensus Property}
Given a set of interconnected systems, the interaction topology can be modeled by a directed weighted graph $\mathcal G = (\mathcal V, \mathcal E, \mathcal A)$, where $\mathcal V$ and $\mathcal E$ denote the set of vertices and edges, respectively, and $\mathcal A$ is a weighted adjacency matrix with nonnegative elements $a_{kl}$ which is assigned to the pair $(k,l)\in \mathcal E$. Let $\mathcal N_i$ denote neighbors of $i$. The Laplacian matrix $L = [l_{ij}]$ associated to $\mathcal G = (\mathcal V, \mathcal E, \mathcal A)$ is defined as $l_{ij} = \sum_{k\in \mathcal N_i} a_{ik}$ if $ i= j$ and $l_{ij}= -a_{ij}$ otherwise.  If there is at least one node having directed paths to any other nodes, $\mathcal G$ is said to have a rooted-out branch. In the following, we only consider the graph $\mathcal G$ with a rooted-out branch. The following result is borrowed from \cite{Moreau:CDC2004}.
\begin{theorem}
Every eigenvalue of $L$ has strictly positive real part except for a simple zero eigenvalue with the corresponding eigenvector $[1,1,\dots 1]^T$, if and only if the associated digraph has a rooted-out branch.
\label{thm_laplacian}
\end{theorem}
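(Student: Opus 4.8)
The plan is to separate the spectral localization of $L$ from the combinatorial characterization of the multiplicity of its zero eigenvalue. First I would record the two structural facts that hold for every weighted digraph. The row sums of $L$ vanish by construction, so $L\mathbf 1=0$ and $[1,\dots,1]^T$ is always a null vector. Moreover, since $l_{ii}=\sum_{j\neq i}|l_{ij}|\ge 0$, the Gershgorin disc of row $i$ is $\{z:|z-l_{ii}|\le l_{ii}\}$, which is contained in $\{\mathrm{Re}(z)\ge 0\}$ and meets the imaginary axis only at $z=0$. Hence every eigenvalue has nonnegative real part, and any eigenvalue with zero real part must equal $0$. With this in hand the theorem reduces entirely to showing that the zero eigenvalue is \emph{simple} if and only if $\mathcal G$ has a rooted-out branch.

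For the forward implication I would use a discrete maximum principle. Writing out $Lx=0$ gives, for each vertex $i$ with $\mathcal N_i\neq\emptyset$, the averaging identity $x_i=\big(\sum_{j\in\mathcal N_i}a_{ij}x_j\big)/\big(\sum_{j\in\mathcal N_i}a_{ij}\big)$, so $x_i$ is a convex combination of the values of the vertices $j\in\mathcal N_i$ that influence it. Treating the real and imaginary parts of $x$ separately (legitimate because $L$ is real), let $S$ be the set of indices attaining the maximum of $\mathrm{Re}(x_i)$; the averaging identity forces every vertex of $\mathcal N_i$ to lie in $S$ whenever $i\in S$. Given a root $r$ possessing directed paths to all vertices, tracing such a path from $r$ to an arbitrary element of $S$ backward along its edges shows $r\in S$; the symmetric argument for the minimum places $r$ in the argmin set as well, so $\mathrm{Re}(x)$ is constant, and likewise $\mathrm{Im}(x)$. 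Thus $\ker L=\mathrm{span}\{\mathbf 1\}$ and the geometric multiplicity of $0$ is one.

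To promote this to algebraic multiplicity one I would invoke the nonnegative left null vector $w\ge 0$ of $L$ supplied by Perron--Frobenius theory, which under the rooted-out-branch hypothesis is supported on the root component and therefore satisfies $w^T\mathbf 1>0$. A Jordan block of size at least two would produce a generalized eigenvector $g$ with $Lg=\mathbf 1$, whence $0=(L^Tw)^Tg=w^T\mathbf 1>0$, a contradiction; so $0$ is semisimple and hence simple. For the converse I would pass to the condensation of $\mathcal G$ into strongly connected components: the absence of a rooted-out branch forces at least two distinct source components $C_1,C_2$ receiving no external edges. The restriction of $L$ to each $C_k$ is itself the Laplacian of a strongly connected digraph, whose one-dimensional null space extends (by solving the remaining averaging equations on the downstream vertices, which form a well-posed convex-combination system) to a global null vector of $L$; the two vectors obtained in this way are linearly independent, giving $\dim\ker L\ge 2$.

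I expect the combinatorial core --- the exact equivalence between $\dim\ker L=1$ and the existence of a rooted-out branch --- to be the main obstacle, both in handling the edge orientations correctly inside the averaging argument and in constructing the two independent null vectors in the converse; by contrast the Gershgorin localization is routine. This is precisely the content extracted from the consensus analysis of \cite{Moreau:CDC2004}, on which I would model the final write-up.
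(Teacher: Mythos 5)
Your proposal is correct, but note that the paper itself contains no proof of this statement: Theorem~\ref{thm_laplacian} is imported verbatim from \cite{Moreau:CDC2004}, so any argument here is new relative to the paper. Your route is the standard direct spectral one --- Gershgorin discs to pin the spectrum in the closed right half-plane touching the imaginary axis only at the origin, a discrete maximum principle propagated backward along a path from the root to show $\ker L=\spand\{\mathbf 1\}$, a nonnegative left null vector to kill a putative Jordan block via $0=w^TLg=w^T\mathbf 1>0$, and the condensation into strongly connected components for the converse. This differs in kind from the cited source, where Moreau analyzes the consensus \emph{dynamics} with set-valued Lyapunov/convex-hull contraction arguments; his method buys robustness to time-varying and nonlinear interactions, while your purely algebraic argument buys a self-contained, static proof that delivers the eigenvector characterization and the algebraic (not merely geometric) simplicity of the zero eigenvalue directly. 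Two steps you assert would need to be fleshed out in a final write-up, though both are standard: the existence of the nonnegative left null vector $w$ follows from Perron--Frobenius applied to the nonnegative matrix $sI-L$ for $s\geq\max_i l_{ii}$, whose spectral radius equals $s$ (and note $w^T\mathbf 1>0$ already because $w\geq 0$, $w\neq 0$ --- no support argument is needed); and the well-posedness of the downstream system in your converse is the nonsingularity of the grounded Laplacian $L_{dd}$, an M-matrix fact that holds because every non-source vertex traces back through the condensation DAG to a source. With those two lemmas cited or proved, your argument is complete, and the two null vectors obtained from distinct source components $C_1,C_2$ are indeed independent, so the zero eigenvalue fails to be simple exactly when the rooted-out branch fails to exist.
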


Let us consider $N$-agents whose interaction graph is given by $\mathcal G =(\mathcal V, \mathcal E, A)$. A classical consensus protocol in continuous-time is as follows \cite{Luca:Automatica2009}:
\begin{eqnarray}
\dot{ x}_i(t) = \sum_{j \in \mathcal N_i} a_{ij} ( x_j(t) -  x_i(t)) , ~ \forall i \in \mathcal V
\label{eq1}
\end{eqnarray}
where $ x_i \in \mathbb R^n$. Using the Laplacian matrix, (\ref{eq1}) can be equivalently expressed as 
\begin{eqnarray}
\dot{\bold x}(t) = -L\otimes I_n \bold{x}(t)
\label{eq2}
\end{eqnarray}
The following result is a typical one. 
\begin{theorem}\label{thm_consensus}
The equilibrium set $\mathcal E_x = \{ \bold x \in \mathbb R^{nN} :  x_i =  x_j,~ \forall i,j \in \mathcal V\}$ of (\ref{eq2}) is globally exponentially stable if and only if $\mathcal G$ has a rooted-out branch. Moreover, the state $\bold x(t)$ converges to a finite point in $\mathcal E_x$. 
\end{theorem}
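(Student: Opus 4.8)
The plan is to treat (\ref{eq2}) as a linear time-invariant system and read off its asymptotic behavior directly from the spectrum of $L$, which is completely characterized by Theorem \ref{thm_laplacian}. First I would record two structural facts: the eigenvalues of $-L\otimes I_n$ are exactly those of $-L$, each repeated $n$ times, and $\ker(L\otimes I_n)=\ker L\otimes \mathbb R^n$. When $\mathcal G$ has a rooted-out branch, Theorem \ref{thm_laplacian} gives that $0$ is a simple eigenvalue of $L$ with right eigenvector $\mathbf 1=[1,\dots,1]^T$ while every other eigenvalue has strictly positive real part; hence $\ker L=\mathrm{span}\{\mathbf 1\}$ and $\mathcal E_x=\mathrm{span}\{\mathbf 1\}\otimes\mathbb R^n$ is precisely the null space of $L\otimes I_n$.

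For the sufficiency direction I would use the spectral projection attached to the simple eigenvalue $0$ rather than a Lyapunov argument, since $L$ is in general non-symmetric (the graph is directed) and no symmetric decomposition is available. Let $w^T$ be the left eigenvector, $w^T L=0$, normalized so that $w^T\mathbf 1=1$, and set $P:=\mathbf 1 w^T$, so that $P^2=P$ and $LP=PL=0$ (using $L\mathbf 1=0$ and $w^TL=0$). Splitting $\mathbf x=(P\otimes I_n)\mathbf x+((I-P)\otimes I_n)\mathbf x$, the first component is invariant, because $\frac{d}{dt}(P\otimes I_n)\mathbf x=-(PL\otimes I_n)\mathbf x=0$; it therefore stays fixed at $(P\otimes I_n)\mathbf x(0)$, a point of $\mathcal E_x$. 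On the complementary invariant subspace the dynamics reduce to $\dot{\mathbf x}_\perp=-(L\otimes I_n)\mathbf x_\perp$, where the restriction of $L$ carries exactly the nonzero spectrum, hence lies in the open right half plane, so $-L$ restricted is Hurwitz and $\mathbf x_\perp(t)\to 0$ exponentially. Combining the two pieces shows $\mathbf x(t)$ converges exponentially to the finite point $(P\otimes I_n)\mathbf x(0)\in\mathcal E_x$, which is precisely global exponential stability of $\mathcal E_x$ together with the ``finite point'' claim.

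For the necessity direction I would argue by contraposition. If $\mathcal G$ has no rooted-out branch, Theorem \ref{thm_laplacian} says the spectral characterization fails; and since $\mathbf 1$ is always a right null vector of a graph Laplacian (the rows sum to zero), the only way it can fail is that the eigenvalue $0$ is not simple. If its geometric multiplicity exceeds one, then $\ker(L\otimes I_n)$ is strictly larger than $\mathcal E_x$, producing equilibria outside $\mathcal E_x$ and destroying global attractivity; if instead there is a nontrivial Jordan block at $0$, the associated generalized eigenvector generates a trajectory growing linearly in $t$, so the system is not even stable. In either case $\mathcal E_x$ fails to be globally exponentially stable, which closes the equivalence.

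The step I expect to require the most care is the spectral decomposition for the non-symmetric $L$: I must verify that $P=\mathbf 1 w^T$ is genuinely the projection onto $\mathrm{span}\{\mathbf 1\}$ along the sum of the remaining (generalized) eigenspaces, which rests on the simplicity of the zero eigenvalue guaranteed by Theorem \ref{thm_laplacian}, and that the restriction of $L$ to the complementary subspace inherits exactly the nonzero part of the spectrum. The remaining work, namely the Kronecker-product bookkeeping and the exponential decay estimate on $\mathbf x_\perp$, is routine.
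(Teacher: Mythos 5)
The paper never actually proves this theorem: it is introduced with ``The following result is a typical one'' and quoted as standard from the consensus literature, so there is no in-paper argument to compare against. Your proposal therefore has to stand on its own, and for the most part it does: the sufficiency argument via the rank-one spectral projection $P=\mathbf{1}w^{T}$ attached to the simple zero eigenvalue, with Hurwitz decay of $-L$ restricted to the complementary $L$-invariant subspace, is exactly the standard proof, and you correctly identify that the validity of $P$ as the spectral projection rests on the \emph{algebraic} simplicity of the zero eigenvalue supplied by Theorem \ref{thm_laplacian}. The Kronecker bookkeeping, the invariance of $(P\otimes I_n)\mathbf{x}$, and the identification of the limit $(P\otimes I_n)\mathbf{x}(0)\in\mathcal E_x$ are all sound.

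Two steps in your necessity direction need repair, though both are one-line fixes. First, when you negate Theorem \ref{thm_laplacian} you assert that, since $\mathbf{1}$ is always a null vector, ``the only way it can fail is that the eigenvalue $0$ is not simple.'' A priori the characterization could also fail through a \emph{nonzero} eigenvalue with nonpositive real part; to exclude this you need the diagonal dominance of $L$: each Gershgorin disc is centered at $l_{ii}=\sum_{k\in\mathcal N_i}a_{ik}\ge 0$ with radius $l_{ii}$, hence lies in the closed right half-plane and meets the imaginary axis only at the origin, so $\mathrm{Re}(\lambda)\ge 0$ for every eigenvalue and $0$ is the only possible eigenvalue on the axis. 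Second, your treatment of the Jordan-block case draws the wrong intermediate conclusion. With geometric multiplicity one, the generalized eigenvector satisfies $-Lv=c\mathbf{1}$, $c\neq 0$, so the trajectory from $v\otimes e$ is $\mathbf{x}(t)=(v+ct\mathbf{1})\otimes e$: the linear growth is entirely \emph{along} $\mathcal E_x$, so the set is not rendered unstable by it --- indeed the distance from $\mathbf{x}(t)$ to the subspace $\mathcal E_x$ is constant in $t$. The correct conclusion is that this constant positive distance defeats attractivity (and the unbounded trajectory defeats the ``finite point'' claim), which still kills global exponential stability, but not for the reason you state. With those two patches the equivalence closes and the proof is complete.
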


\subsection{Orientation Alignment}
We assume that relative orientation and relative displacement measurements are only available in the formation control problem. The orientation of local reference frame can be controlled to be aligned with others. For the alignment of orientation, we can consider the consensus algorithm of (\ref{eq1}) based on the relative quantities of states i.e. $(x_j-x_i), ~ j\in \mathcal N_i$. In the case of $\mathrm{SO}(2)$, the orientation of $i$-th agent is denoted by $\theta_i \in \mathrm{S}^1$. Let us denote the displacement between $\theta_j$ and $\theta_i$ as $\theta_{ji}$. Note that $\theta_{ji} = \theta_{ji} - 2\pi $ and in general, $\theta_{ji} \neq \theta_j- \theta_i~.$ The consensus protocol in the unit circle space is as follows\cite{Oh:TAC2014, Montijano:ACC2014}:
\begin{eqnarray}\label{eq_consensus_unit_circle}
\dot \theta_i = \sum_{j\in \mathcal N_i} a_{ij} \theta_{ji} ,~~ i \in \mathcal V
\end{eqnarray}
In Euclidean space, the convex hull of a set of points in $n$-dimensions is invariant under the consensus algorithm of (\ref{eq1}). The permanent contraction of this convex hull allows to conclude that the agents end up at a consensus value\cite{Jadbabaie:TAC2003,Sepulchre:ARC2011}. In unit circle space, convergence property of the consensus algorithm based on the relative quantities of states is analyzed in similar way, while the convergence of states to a consensus value is not possible for all initial value of states. In other words, there is a subset such that convex hull of the set of points is invariant. In the following example, it will be shown that there is a domain which does not achieve the contraction of the convex hull under the consensus algorithm (\ref{eq_consensus_unit_circle}). 

\begin{itemize}
\item \emph{Example of anti-synchronization}: Let us suppose that twenty agents on the unit circle have all-to-all interaction topology shown in Fig. \ref{fig_anti-consensus}. Consensus protocol based on the displacement is designed as follows : 
\begin{eqnarray} \label{eq_anti-consensus}
\dot{\theta}_i  = \sum_{j\in \mathcal N_i} \theta_{ji}, ~~~ i\in \{1,2,\dots,20\}
\end{eqnarray}
The initial values of $\theta_i$ are assigned on out of range of $\pi$ as the pattern which is shown in Fig.\ref{fig_anti-consensus}. The simulation result of (\ref{eq_anti-consensus}) in Fig.\ref{fig_anti-consensus} shows that consensus is not guaranteed on unit circle space.  
\end{itemize}
\begin{figure}[!tb]
\begin{center}
\includegraphics[width=0.2\textwidth]{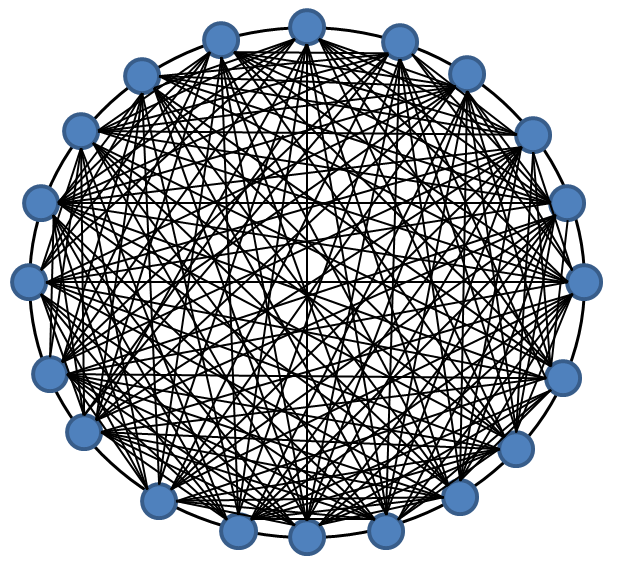}
\includegraphics[width=0.25\textwidth]{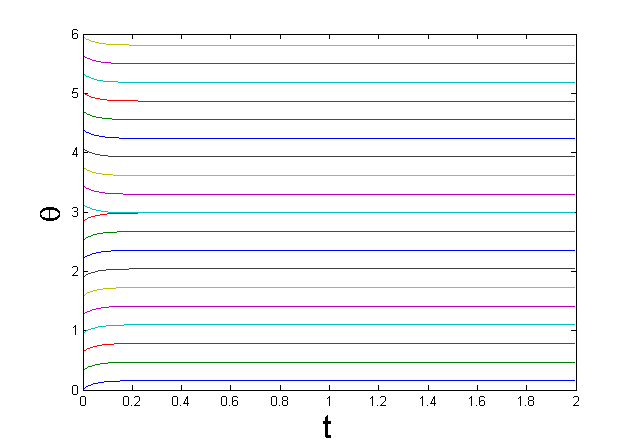}
\caption{Twenty particles on unit circle with all-to-all sensing graph (left figure) and the simulation result(right figure) of (\ref{eq_anti-consensus}) are demonstrated. }
\label{fig_anti-consensus}
\end{center}
\end{figure}
In the unit circle space, a sufficient condition for the permanent contraction of a convex hull is stated 
in the following theorem. 
\begin{theorem}[\cite{Oh:TAC2014, Dorfler:Automatica2014}]
Consider the coupled oscillator model (\ref{eq_consensus_unit_circle}) with a connected graph $\mathcal G(\mathcal V,\mathcal E, \mathcal A)$. Suppose that a convex hull of all initial values is within an open semi circle. Then, this convex hull is positively invariant, and each trajectory originating in the convex hull achieves an exponential synchronization. 
\end{theorem}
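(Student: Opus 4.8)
The plan is to reduce the nonlinear dynamics on $\mathrm{S}^1$ to the linear Laplacian consensus of (\ref{eq2}) on an invariant region and then to invoke Theorem \ref{thm_consensus}. The crucial starting observation is that when all phases lie inside an open arc of length strictly less than $\pi$, the shortest-arc displacement $\theta_{ji}$ between any two agents coincides with the ordinary real difference $\theta_j-\theta_i$: the minimizing geodesic joining two points of such an arc stays inside the arc and never wraps through the antipodal point, so no $2\pi$ ambiguity arises. Consequently, as long as the state remains in the semicircle, (\ref{eq_consensus_unit_circle}) is exactly $\dot\theta_i=\sum_{j\in\mathcal N_i}a_{ij}(\theta_j-\theta_i)$, which is precisely the scalar ($n=1$) instance of the linear consensus dynamics (\ref{eq2}).

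The first step I would carry out is to prove positive invariance of the convex hull, so that the above reduction is valid for all $t\ge 0$. I would introduce the width $V(t)=\max_{i}\theta_i(t)-\min_{i}\theta_i(t)$ and show it is nonincreasing. For the agent attaining the maximum, say agent $m$ with $\theta_m=\max_i\theta_i$, every neighbor satisfies $\theta_j\le\theta_m$, and because all agents sit inside the open semicircle the displacements obey $\theta_{jm}=\theta_j-\theta_m\le 0$; hence the corresponding upper Dini derivative satisfies $D^+\theta_m=\sum_{j\in\mathcal N_m}a_{jm}(\theta_j-\theta_m)\le 0$. A symmetric argument gives $D^+\big(\min_i\theta_i\big)\ge 0$, so $D^+V(t)\le 0$ and the arc never widens. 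Since the initial width is strictly below $\pi$, the trajectory stays in the open semicircle for all time, and the linear reduction holds globally on this invariant set.

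Having established invariance, the second step is immediate: on the invariant arc the closed loop is the connected-graph Laplacian flow (\ref{eq2}), and Theorem \ref{thm_consensus} then yields global exponential stability of the agreement set $\{\theta_i=\theta_j\ \forall i,j\}$ relative to that set, with convergence to a finite common phase. The exponential rate is governed by the smallest nonzero real part of the spectrum of $L$ guaranteed positive by Theorem \ref{thm_laplacian}, which is exactly the exponential synchronization claimed.

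The main obstacle I anticipate is the rigorous handling of the positive-invariance argument rather than the linear-algebra conclusion. The functions $\max_i\theta_i$ and $\min_i\theta_i$ are only piecewise smooth, so the velocity-points-inward reasoning must be made precise through nonsmooth (Dini) derivatives, taking care of the instants at which the extremal index changes and, if several agents coincide at the boundary, of which one to differentiate; one must also verify that strict inequality in the semicircle hypothesis is what prevents $V$ from ever reaching $\pi$ where the identity $\theta_{ji}=\theta_j-\theta_i$ could fail. Once this boundary analysis is settled, the remaining steps are routine applications of the results already stated in the excerpt.
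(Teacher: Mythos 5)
Your proposal is correct, and it follows essentially the same route as the source of this result: the paper itself states the theorem without proof, citing \cite{Oh:TAC2014, Dorfler:Automatica2014}, and your argument---identifying $\theta_{ji}$ with $\theta_j-\theta_i$ inside the open semicircle, proving positive invariance by showing $\max_i\theta_i$ is nonincreasing and $\min_i\theta_i$ is nondecreasing via Dini derivatives, and then invoking the linear consensus result (Theorem \ref{thm_consensus}, applicable since a connected graph has a rooted-out branch) for the exponential rate---is precisely the standard proof in those references and matches the convex-hull-contraction sketch in the paper's surrounding discussion. The only blemishes are minor: the weight in your Dini-derivative bound should read $a_{mj}$ rather than $a_{jm}$, and the chicken-and-egg issue you flag (the identity $\theta_{ji}=\theta_j-\theta_i$ is needed to prove the invariance that justifies it) is closed by the standard continuation argument you describe, since the identity holds at $t=0$ and the nonincreasing width keeps all pairwise differences strictly below $\pi$ on the maximal interval of validity.
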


The fundamental difference between a consensus protocol (\ref{eq1}) in Euclidean space and another protocol (\ref{eq_consensus_unit_circle}) in unit circle space is the \emph{non-convex\footnote{\emph{Non-convexity} is the opposite of \emph{convexity}. Convexity property implies that the future, updated value of any agent in the network is a convex combination of its current value as well as the current values of its neighbors. It is thus clear that $\mathrm{max}\{x_1,x_2,\dots ,x_N\}$ is a non-increasing function of time\cite{Moreau:TAC2005}.}
  nature of configuration spaces} like circle or sphere. For this reason, a global convergence analysis of the consensus algorithm in nonlinear space is quite intractable and at least very dependent on the communication graph\cite{Sepulchre:ARC2011}.

In this paper, instead of controlling the orientation to be aligned, we estimate the orientation by using auxiliary variables defined on the higher dimensional vector spaces rather than unit circle space. Although consensus algorithm plays still important role in the proposed method, unlike the alignment law (\ref{eq_consensus_unit_circle}), the proposed method achieves global convergence of the estimated variables. This is described in the next section. 


\subsection{Problem Statement}
Consider $N$ single-integrator modeled mobile agents in the space:
$\dot { p}_i = u_i ~, ~i = 1,\dots N,$ 
where $ {p}_i\in \mathbb R^3$  and $u_i \in \mathbb R^3$ denote the position and control input, respectively, of agent $i$ with respect to the global reference frame, which is denoted by $^g \sum$. Following the notions of \cite{Oh:TAC2013}, we assume that agent $i$ maintains its own local reference frame with the origin at $ p_i$. An orientation of $i$-th local reference frame with respect to the global reference frame is identified by a proper orthogonal matrix $C_i \in \mathrm{SO}(3)$ as illustrated in Fig. \ref{fig_orien}. By adopting a notation in which superscripts are used to denote reference frames, the dynamics of agents are described as
$\dot {{p}}_i^i =  u_i^i, ~ i= 1,\dots N,$ 
where ${{p}}_i^i\in \mathbb R^3$ and $ u_i^i \in \mathbb R^3$ denote the position and control input, respectively, of agent $i$ with respect to $^i \sum$. 
For a weighted directed graph $\mathcal G = (\mathcal V, \mathcal E, \mathcal A)$, agent $i$ measures the relative positions of its neighboring agents with respect to $^i\sum$ as: 
\begin{eqnarray}
p^i_{ji} := p_j^i -  p_i^i, ~ j\in \mathcal N_i, i\in \mathcal V. 
\label{eq_relative_pos}
\end{eqnarray}
We note that $ p_j-  p_i =  C_i^{-1}(p_j^i-  p_i^i)$, because the local reference frames are oriented as much as the rotation matrix $C_i$ from the global reference frame. 
\begin{figure}[!tb]
\begin{center}
\includegraphics[width=0.3\textwidth]{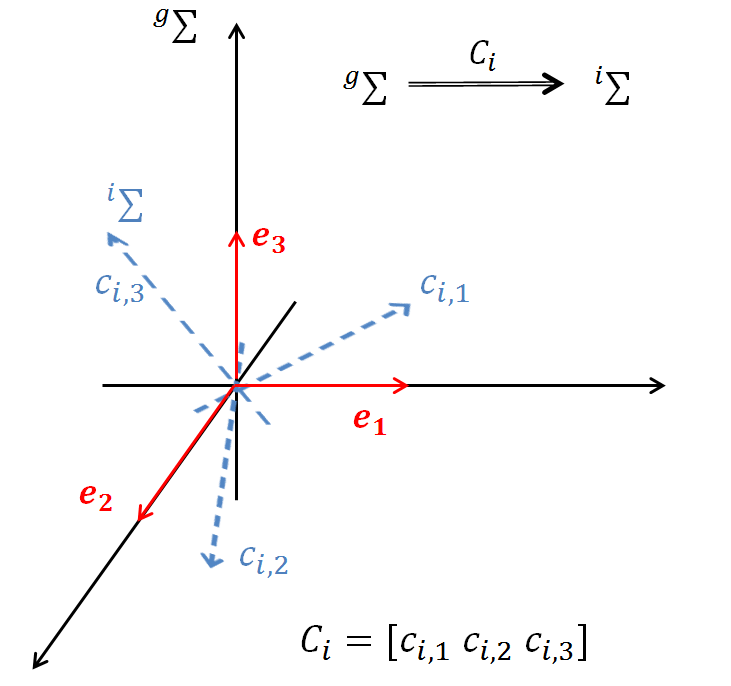}
\caption{The local coordinate frame of $i$-th agent is rotated from the global reference frame $\left(^g\sum \right)$ by $C_i$. $\{e_{i,1},e_{i,2},e_{i,3}\}$ is the set of orthonormal bases of $i$-th local frame $^i\sum$.}
\label{fig_orien}
\end{center}
\end{figure}

In this paper, we attempt to estimate $C_i$ by using the relative orientation measurement. Then, the estimated orientation is used to calibrate $C_i$ for the formation control. 
Let the desired formation $\bold p^* = ( p_1^*, \dots, p_N^*) \in \mathbb R^{3N}$ be given. The formation problem is then stated as follows:
\begin{problem} \label{prob_1}
Consider $N$-agents modeled by (\ref{eq1}). Suppose $\mathcal G = (\mathcal V, \mathcal E, \mathcal A)$ is the interaction graph of the agents. For a common reference frame $^a\sum$, design a control law such that $ p_j^a -  p_i^a \rightarrow  p_j^* -  p_i^*$ as $t\rightarrow \infty$ for all $i,j\in \mathcal V$ based on the measurements such as the relative displacement and relative orientation, which are measured in local coordinate frames of each agent.
\end{problem}

We next formulate the localization problem. Under the same assumption of the \emph{Problem} \ref{prob_1}, we desire to estimate the position of each agent. We denote $\hat p_i$ as the estimated position of $p_i$ with respect to the global reference frame. Then, the localization problem is stated in follows:
\begin{problem}\label{prob_2}
Consider $N$-agents modeled by (\ref{eq1}). Suppose $\mathcal G = (\mathcal V, \mathcal E, \mathcal A)$ is the interaction graph of the agents. For each agent $i\in \mathcal V$, design an estimation law of $\hat p_i$ such that $\| \hat p_j - \hat p_i \| \rightarrow \| p_j - p_i \|$ as $t\rightarrow \infty $ for all $j \in \mathcal V$ based on the measurements such as the relative displacement and relative orientation, which are measured in local coordinate frames of each agent.
\end{problem}

The following section proposes a method for the orientation estimation by using only relative orientation measurement. We describe the estimation method in general $n$-dimensional space instead of 3-dimensional space. 

\section{Novel Method for Orientation Estimation}
Orientation or attitude is often represented using three or four parameters in 3-dimensional space. Three or four-parameter representations include the unit-quaternions, axis-angle and Euler angles. While these parameters cannot represent orientation globally nor uniquely, the rotation matrix in $\mathrm{SO}(n)$ is unique and globally defined\cite{Chaturvedi:CSM2011}. 

The orientation in the $n$-dimensional space is defined by an $n\times n$ proper orthogonal matrix $C \in \mathrm{SO}(n)$ containing, as column vectors, the unit vectors $\bold c_i (i=1,2,\dots n)$ identifying the orientation of the local frame with respect to the reference frame. A matrix $C$ has properties such as $C^T C = I$ and $\mathrm{det}(C)= +1$. When the common reference frame is the Cartesian coordinate system with standard basis $\bold e_i $$(i=1,\dots,n)$, the orientation of local frame is represented as $\bold c_i =C \bold e_i$ which is the $i$-column vector of $C$. 
\subsection{Orientation estimation in $\mathrm{SO}(n)$}
Consider $N$ agents whose interaction graph is $\mathcal G$ that has a spanning tree in $n$-dimensional space. Each agent does not share a common reference frame. It follows that the $i$-th local reference frame is rotated from the global reference frame with the amount of $C_i\in \mathrm{SO}(n)$. Let us suppose that $C_{ji}$ denotes the orientation of $j$-th local reference frame with respect to the $i$-th local reference frame. {It provides $i$-th agent view to the $j$-th local reference frame with respect to the $i$-th local reference frame.}
If the global reference frame has the standard basis, each basis of $i$-th local frame with respect to the global reference frame is the column vector of $C_i$. It is well-known that $C_{ji}$ is defined by the multiplication of two rotation matrices as follows:
\begin{eqnarray}
C_{ji} = C_j C_i^{-1} = C_j C_i^T
\label{relative_orien}
\end{eqnarray}
We assume that the $i$-th agent measures the orientation of $j$-th agent with respect to the $i$-th local reference frame.
Therefore, $C_{ji}$ is obtained by the measurement. Let $\hat C_i \in \mathrm{SO}(n), \forall i\in \mathcal V$ be an estimated orientation which is the proper orthogonal matrix. Orientation estimation problem can be stated as follows:
\begin{problem}\label{problem1}
Consider $N$-agents whose interaction graph is $\mathcal G$ in $n$-dimensional space. For the common orientation which is identified by $C^* \in \mathrm{SO}(n)$, design an estimation law such that $C_i(t)^{T} \hat{ C}_i \rightarrow C^*$ as $t\rightarrow \infty$ for all $i \in \mathcal V$ based on the relative orientation (\ref{relative_orien}). 
\end{problem}

Let us denote $B_i\in \mathrm{SO}(n)$ which is defined as 
\begin{eqnarray}
B_i := C_i X ~,~X\in \mathrm{SO}(n), \forall i\in \mathcal V.
\end{eqnarray}
According to (\ref{relative_orien}), it follows that 
\begin{eqnarray}
B_j = C_{ji} B_i~, ~ \forall (i,j)\in \mathcal E.
\label{relation}
\end{eqnarray}
Under the assumption of \emph{Problem \ref{problem1}}, finding the steady state solution of $\hat C_i$ is equivalent to finding $B_i$ satisfying the equality (\ref{relation}). 
Then, \emph{Problem \ref{problem1}} can be restated as follows:
\begin{problem}\label{problem2}
Under the assumption of \emph{Problem \ref{problem1}}, design an evolutionary algorithm for $B_i:[0,\infty) \rightarrow \mathbb R^{n\times n}$ such that 
\begin{itemize}
\item $B_i(t) \in \mathrm{SO}(n) ~ ,~ \forall t\in [0,\infty)$
\item $B_j(t) \rightarrow C_{ji} B_i(t)$ as $t \rightarrow \infty$ , $\forall (i,j)\in \mathcal E$
\end{itemize}
\end{problem}

Let us suppose that $n-1$ auxiliary variables are generated at each agent. Each auxiliary variable at $i$-th agent is denoted by $z_{i,k} \in \mathbb R^{n}, k\in \{1,2,\dots, n-1\}$. Let $B_i$ be represented as $B_i=[b_{i,1}, b_{i,2},\dots,b_{i,n}]$ where $b_{i,k}\in \mathbb R^n, k\in\{1,\dots n\}$, is a $k$-th column vector of $B_i$. To generate orthonormal column vectors of $B_i$ from $z_{i,k}$ , we can use Gram-Schmidt process with any independent vectors $z_{i,k} \forall k\in \{1,\dots n-1\}$ as follows:
\begin{eqnarray}
\begin{array}{ll}
v_{i,1}  := z_{i,1}  &,b_{i,1} := \frac{v_{i,1}}{ \|v_{i,1}\|}  \\
v_{i,2} := z_{i,2} - \langle z_{i,2}, b_{i,1}\rangle b_{i,1} &, b_{i,2} := \frac{v_{i,2}}{ \|v_{i,2}\|}  \\
\quad \vdots &  ~~~~~~~\vdots\\
v_{i,m} := z_{i,m} - \sum_{k=1}^{m-1}\langle z_{i,m}, b_{i,k}\rangle b_{i,k}  &,  b_{i,m} :=  \frac{v_{i,m}}{ \|v_{i,m}\|}
\end{array}\label{gram-schmidt}
\end{eqnarray}
where $m = n-1$ and $\langle \cdot,\cdot \rangle$ denotes the operator of inner product. The rest of the procedure include determining $b_{i,n}$ such that the determinant of $B_i$ is equivalent to $1$. We consider $b_{i,n}$ as a \emph{pseudovector}\footnote{Pseudovector is a vector-like object which is invariant under inversion. In physics a number of quantities behave as pseudovector including magnetic field and angular velocity. In 3-dimensional vector space, the pseudovector $p$ is associated with the cross product of two vectors $a$ and $b$ which is equivalent to 3-dimensional bivectors : $p = a \times b$. }
 for making det($B_i$) to be one. Pseudovector is a quantity that transforms like a vector under a proper rotation. In $n$-dimensional vector space, a pseudovector can be derived from the element of the ($n-1$)-th exterior powers, denoted $\wedge^{(n-1)}(\mathbb R^n)$. It follows that
\begin{eqnarray} \label{exterior_power}
b_{i,1} \wedge  b_{i,2} \wedge \cdots \wedge b_{i,n-1} . 
\end{eqnarray}
Since every vector can be written as a linear combination of basis vectors, $\wedge^{(n-1)}(\mathbb R^n)$ can be expanded to a linear combination of exterior products of those basis vectors. The pseudovector can be obtained from coefficients of (\ref{exterior_power}). Since we obtain $b_{i,n}$ from the formalization of pseudovector, the matrix $B_i$ is a proper rotation.

This is one way to calculate the quantity $b_{i,n}$ in  higher dimensional space, while there is no natural identification of $\wedge^{(n-1)}(\mathbb R^n)$ with $\mathbb R^n$. Another convention for obtaining $b_{i,n}$ with proper rotation $B_i$ is stated in \cite{Lee:ICARCV2016}. 

Now, we consider a second goal of the \emph{Problem \ref{problem2}}. We propose a single integrator model for dynamics of auxiliary variables as follows : 
\begin{eqnarray} \label{z_model}
\dot z_{i,k} = u_{i,k} \quad, \forall i\in \mathcal V, \forall k\in \{1,\dots, n-1\}.
\end{eqnarray}
Under the assumption of the \emph{Problem \ref{problem1}}, a control law for (\ref{z_model}) is designed as follows for all $ i\in \mathcal V$ and $k\in \{1,\dots, n-1\}$: 
\begin{eqnarray}
u_{i,k} = \sum_{j\in \mathcal N_i} a_{ij}(C_{ji}^{-1} z_{j,k} - z_{i,k}) 
\label{dynamics}
\end{eqnarray}
where $a_{ij}> 0$. From the definition of $C_{ji}$, we have $C_{ji}^{-1} = C_{ji}^T = C_{ij}$. Let $\bold z_k$ be a stacked vector form defined by $\bold z_k := ( z_{1,k},z_{2,k},\dots, z_{N,k})$. Here, (\ref{z_model})-(\ref{dynamics}) can be written in terms of $\bold z_k$ as follows:
\begin{eqnarray} \label{vec_z_model}
\dot {\bold z}_k = H \bold z_k
\end{eqnarray}
where $H\in \mathbb R^{nN\times nN}$ is a block matrix which is defined as 
\begin{eqnarray}
H := \left[\begin{array}{cccc}
H_{11} & H_{12} & \cdots & H_{1N} \\
H_{21} & H_{22} & \cdots & H_{2N} \\
\vdots  & \vdots  & \ddots & \vdots \\
H_{N1} & H_{N2} & \cdots & H_{NN} \\
\end{array} \right]
\end{eqnarray}
Each partition $H_{ij} \in \mathbb R^{n\times n}$ is written as follows:
\begin{eqnarray}
H_{ij} = \left\{\begin{array}{cl}
a_{ij} C_{ij} &, j\in \mathcal N_i\\
-\sum_{j\in \mathcal N_i}a_{ij} I_n &, i=j\\
\bold 0_n &, j\notin \mathcal N_i 
\end{array} \right.
\end{eqnarray}
 where $\bold 0_n \in \mathbb R^{n\times n}$ is a matrix having only zero entries. 
 \subsection{Analysis of the Stability}
 Let us consider the eigenvalue of $H$. By using a nonsingular matrix $D$, similarity transformation can be achieved as follows : $\bar H := D^{-1} H D$. Suppose that $D$ is a block diagonal matrix defined by $D := \diag(C_1 , C_2,\dots ,C_N)$ where $C_i \in \mathbb R^{n\times n}$, $i=\{1,\dots N\} $. Suppose that $\bar H_{ij}\in \mathbb R^{n\times n}$ denotes a partition of $i$-th row and $j$-th column in the block matrix $\bar H$. $\bar H_{ij}$ is written as follows :
\begin{eqnarray}
\bar H_{ij} = C_i^T H_{ij} C_j = \left\{ \begin{array}{cl}
a_{ij}I_n&  ,j \in \mathcal N_i\\
-\sum_{j\in \mathcal N_i} a_{ij} I_n& , i = j\\
\bold 0_n & , j\notin \mathcal N_i.
\end{array}\right.
\label{bar_H}
\end{eqnarray}
From (\ref{bar_H}), $\bar H$ is rewritten as $\bar H= -L_H \otimes I_n$ where $L_H\in \mathbb R^{N\times N}$  has zero row sum with dominant diagonal entries. From the \emph{Theorem} \ref{thm_laplacian}, all eigenvalues of $L_H$ have strictly positive real part except for a simple zero eigenvalue with corresponding eigenvector $\xi = (1,1,\dots, 1) \in \mathbb R^{N}$. 
 
 Let us consider a coordinate transformation as $\bold z_k = D \bold {q}_k$. Then (\ref{vec_z_model}) is rewritten as follows:
\begin{eqnarray}
 \bold {\dot {q}}_k(t) = -L_H\otimes I_{n-1} {\bold {  q}}_k(t),  ~~\forall k \in \{1,\dots, n-1\} 
 \label{laplace}
\end{eqnarray}
From the \emph{Theorem} \ref{thm_consensus}, it is clear that $\bold{ q}_k$ converges to the equilibrium set which is indicated as $\mathcal E_q = \{ \bold x= (x_1,x_2,\dots,  x_N) \in \mathbb R^{nN} : x_1 = x_2 =\cdots = x_N\} $. For the estimation of orientation, we have to avoid the convergence of variables $\bold{q}_k$ to zero. Therefore, the desired equilibrium set is defined by $\mathcal S_q := \mathcal E_q \setminus \{0\}$. For a square matrix $A$, define the column space of matrix $A$ as $\bold{C}(A)$. Then, we have the relationship $\bold{C}(A) = \rm{null}$$(A^T)^\perp$, where $\mathrm{null}$$(A^T)^\perp$ denotes the orthogonal space of null space of $A^T$. Based on the consensus property mentioned in the previous section, the following lemma provides conditions for the convergence of $\bold {q}_k $ to $\mathcal S_q$. 
\begin{lemma}
Suppose that $\mathcal G$ has a rooted-out branch. For the dynamics (\ref{laplace}), there exists a finite point $\bold{ q}^\infty_k \in \mathcal S_q$ for each $k\in \{ 1,\dots n-1\}$ such that $\bold{ q}_k$ exponentially converges if and only if an initial value $\bold{ q}_k(t_0), \forall k\in \{1,2,\dots,n-1\}$ is not in $\bold C(L_H \otimes I_{n})$. 
\label{thm_conv} 
\end{lemma}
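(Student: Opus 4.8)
The plan is to reduce the statement to the spectral structure of the matrix $M := L_H \otimes I_n$ generating the flow in (\ref{laplace}), and to identify the limit of $\bold q_k(t)$ with a fixed linear projection of its initial value. First I would collect the spectral facts. By Theorem~\ref{thm_laplacian}, $L_H$ has a simple zero eigenvalue with eigenvector $\xi=(1,\dots,1)$ and all of its remaining eigenvalues in the open right half-plane; taking the Kronecker product with $I_n$ leaves the eigenvalues unchanged (each repeated $n$ times), so $M$ has zero as an eigenvalue of algebraic and geometric multiplicity exactly $n$, and every other eigenvalue of $M$ still has strictly positive real part. I would then observe that $\mathrm{null}(M)=\{\xi\otimes v: v\in\mathbb R^n\}$ coincides with the equilibrium set $\mathcal E_q$, since $M\bold q_k=0$ forces all blocks of $\bold q_k$ to be equal.

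The key structural step is the direct sum decomposition $\mathbb R^{nN}=\mathrm{null}(M)\oplus \bold C(M)$, where $\bold C(M)=\mathrm{range}(M)$ is the column space appearing in the lemma. This holds because the zero eigenvalue of $L_H$ is simple, hence semisimple, which through the Kronecker structure makes zero semisimple for $M$; equivalently $\mathrm{null}(M)\cap\bold C(M)=\{0\}$, and the dimensions add to $nN$ by rank--nullity. Both summands are $M$-invariant and therefore invariant under $e^{-Mt}$: on $\mathrm{null}(M)$ the flow is the identity, while on $\bold C(M)$ the restriction of $M$ is invertible with spectrum in the open right half-plane, so $e^{-Mt}$ contracts exponentially there. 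Writing the unique splitting $\bold q_k(t_0)=\bold q_k^{\mathrm{null}}+\bold q_k^{C}$ with $\bold q_k^{\mathrm{null}}\in\mathcal E_q$ and $\bold q_k^{C}\in\bold C(M)$, I obtain
\[
\bold q_k(t)=\bold q_k^{\mathrm{null}}+e^{-Mt}\bold q_k^{C}\longrightarrow \bold q_k^{\mathrm{null}}\quad\text{as }t\to\infty,
\]
with exponential rate. This both recovers the ``convergence to a finite point of $\mathcal E_q$'' granted by Theorem~\ref{thm_consensus} and pins down that point as the projection $\bold q_k^{\infty}=\bold q_k^{\mathrm{null}}$ of $\bold q_k(t_0)$ onto $\mathcal E_q$ along $\bold C(M)$.

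Finally I would read off the equivalence. Since $\bold q_k^{\infty}\in\mathcal E_q$ always holds, the condition $\bold q_k^{\infty}\in\mathcal S_q=\mathcal E_q\setminus\{0\}$ is simply $\bold q_k^{\infty}\neq 0$. From the splitting, $\bold q_k^{\infty}=\bold q_k^{\mathrm{null}}=0$ precisely when $\bold q_k(t_0)=\bold q_k^{C}$ lies entirely in $\bold C(M)$; contrapositively, $\bold q_k(t_0)\notin\bold C(L_H\otimes I_n)$ if and only if $\bold q_k^{\infty}\neq 0$, i.e. $\bold q_k^{\infty}\in\mathcal S_q$, which is the claim. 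Using the relation $\bold C(M)=\mathrm{null}(M^T)^\perp$ noted above, this condition can equivalently be phrased as $\bold q_k(t_0)$ having a nonzero component along the left null space of $M$.

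I expect the main obstacle to be the rigorous justification of the decomposition rather than any computation: one must argue that zero is genuinely semisimple for $M$ (so that $\mathrm{null}(M)$ and $\bold C(M)$ intersect trivially and the flow splits as above) and that $-M$ restricted to $\bold C(M)$ is exponentially stable. Both rely on the simplicity of the zero eigenvalue of the \emph{directed} Laplacian $L_H$ from Theorem~\ref{thm_laplacian} together with elementary Kronecker-product facts; this is exactly where the rooted-out-branch hypothesis is used, and where care is needed because $L_H$ is in general nonsymmetric and need not be diagonalizable on its nonzero spectrum.
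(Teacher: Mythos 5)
Your proposal is correct and is essentially the paper's own argument in coordinate-free dress: the paper computes the limit explicitly via the Jordan form of $L_H$ as $(g_N w_N \otimes I_n)\,\bold q_k(t_0)$ (the spectral projection onto $\mathrm{null}(L_H\otimes I_n)$ along $\bold C(L_H\otimes I_n)$), and concludes the limit vanishes iff $\bold q_k(t_0)\perp w_N\otimes I_n$, i.e.\ iff $\bold q_k(t_0)\in \mathrm{null}(L_H^T\otimes I_n)^\perp=\bold C(L_H\otimes I_n)$ --- exactly your splitting $\mathbb R^{nN}=\mathrm{null}(M)\oplus\bold C(M)$ with the limit being the null-space component. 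Your abstract route also correctly flags (and handles) the one point the paper glosses over, namely that semisimplicity of the zero eigenvalue, inherited through the Kronecker product from the simple zero eigenvalue guaranteed by Theorem~\ref{thm_laplacian}, is what makes the decomposition direct and the flow on $\bold C(M)$ exponentially contracting even when $L_H$ is not diagonalizable on its nonzero spectrum.
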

\begin{proof}
From the \emph{Theorem} \ref{thm_consensus}, $\bold{q}_k, \forall k\in \{1,2,\dots n-1\}$ globally exponentially converges to $\mathcal E_q$. Then, for each $k\in \{1,2,\dots n-1\}$, there exists a finite point $\bold{ q}^\infty_k \in \mathcal E_q$ and constants $\lambda_q,\eta_q>0$ such that
\begin{eqnarray}
\| \bold{ q}_k(t) - \bold{ q}^\infty_k\| \leq \eta_q \mathrm{e}^{-\lambda_q (t-t_0) }\| \bold{ q}_k(t_0) - \bold{ q}^\infty_k\|.
\end{eqnarray}
Now, we consider a steady state solution $\bold { q}^\infty_k$. A solution of $\bold {q}_k$ is written as follows: 
\begin{eqnarray}
\bold { q}_k(t) =  \mathrm{e}^{-L_H\otimes I_{n} (t-t_0)} \bold{ q}(t_0). \label{q_conv}
\end{eqnarray}
The Jordan form of $L_H$ is obtained by the similarity transformation as follows: $G^{-1}L_H G= J$. Let $G$ and $G^{-1}$ be represented as $G = [g_1~\cdots~g_N]$ and $G^{-1} = [w_1^T ~ \cdots ~ w_N^T]^T$ respectively. We supposed that $g_N$ is the right eigenvector of the zero eigenvalue. Then, $w_N$ is the left eigenvector of the zero eigenvalue. Since every nonzero eigenvalue of $L_H$ has negative real part, the state transition matrix $e^{J\otimes I_{n}(t-t_0)}$ has the following form as $t\rightarrow \infty$.
\begin{eqnarray}
\mathrm{e}^{J\otimes I_{n}(t-t_0)} \rightarrow \left[\begin{array}{cccc}
0 & & &\\
 &0 & &\\
 & & \ddots & \\
 & & &1 
\end{array} \right] \otimes I_{n}
\label{stm}
\end{eqnarray}
From (\ref{stm}), the steady state solution of $\bold{{q}}_k$ is as follows:
\begin{align}
&\lim_{t\rightarrow \infty} \bold{ q}_k (t)  \nonumber\\
&\quad = \lim_{t\rightarrow \infty} (G \otimes I_{n}) \mathrm{e}^{J\otimes I_n(t-t_0)}(G^{-1} \otimes I_{n})  \bold{ q}_k (t_0) \nonumber\\
&\quad = (g_N\otimes I_{n})( w_N \otimes I_{n})\bold{ q}_k (t_0)\nonumber\\
&\quad = (g_N w_N \otimes I_{n})\bold{q}_k (t_0) \label{sss_q}
\end{align}
This implies that $\bold{ q}_k (t)$ converges to zero if and only if $\bold{q}_k (t_0)$ is perpendicular to the $w_N \otimes I_{n}$ which is the left eigenvector of the zero eigenvalue. Then, it follows that $(w_N \otimes I_{n})\bold{q}_k(t_0)=0$ if and only if $\bold{q}_k(t_0)$ is in $\rm{null}$$(L_H^T\otimes I_{n})^\perp$. Since $\rm{null}$$(L_H^T\otimes I_{n})^\perp = \bold{C}(L_H\otimes I_{n})$, it completes the proof. 
\end{proof}

The result of \emph{Lemma \ref{thm_conv}} implies $\lim_{t\rightarrow \infty} \bold{z}_k(t) = D \bold q_k^\infty$ for each $k$. Let $B_i \in \mathrm{SO}(n)$ be derived from the procedure including Gram-Schmidt process of $z_{i,k}, \forall k\in \{1,\dots n-1 \}$ and the calculation of $b_{i,n}$. Consequently, there exists $P^\infty_i \in \mathrm{SO}(n)$ to which $B_i$ converges. We can show that $B_i(t) \rightarrow C_{ij} B_j(t), (i,j) \in \mathcal E$ as $t \rightarrow \infty$ from the following result. 
\begin{theorem}\label{thm_P}
Let $C_i \in \mathrm{SO}(n)$ denote a rotation matrix which identifies the orientation of the $i$-th agent. Suppose that $n\times n$ matrix $B_i$ for the $i$-th agent is derived from the procedures (\ref{gram-schmidt}), (\ref{exterior_power}) and (\ref{vec_z_model}). Then, there exists a common matrix $X\in \mathrm{SO}(n)$ such that $B_i$ converges to $C_i X$ as $t\rightarrow \infty$ for all $i\in\{1,\dots N\}$. 
\end{theorem}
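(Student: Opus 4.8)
The plan is to push the exponential convergence of the auxiliary variables established in \emph{Lemma \ref{thm_conv}} through the two deterministic post-processing maps---the Gram--Schmidt orthonormalization (\ref{gram-schmidt}) and the pseudovector completion (\ref{exterior_power})---by showing that both maps are \emph{equivariant} with respect to the rotation $C_i$. The common matrix $X$ will then emerge as the image of the common consensus data under the full construction.

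First I would translate the conclusion of \emph{Lemma \ref{thm_conv}} back to the original coordinates. Since $\bold z_k = D \bold q_k$ with $D = \diag(C_1,\dots,C_N)$, the $i$-th block reads $z_{i,k}(t) = C_i q_{i,k}(t)$. By \emph{Lemma \ref{thm_conv}} the state $\bold q_k$ converges exponentially to a point of $\mathcal S_q$, i.e. to a consensus configuration $q_{1,k} = \cdots = q_{N,k} =: q_k^\infty$ that is common to all agents and nonzero. Hence $z_{i,k}(t) \to C_i q_k^\infty$ for every $i$ and $k$: the limiting input data of agent $i$ is exactly $C_i$ applied to the common data $q_1^\infty,\dots,q_{n-1}^\infty$.

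Second I would prove equivariance of the Gram--Schmidt step under any $R \in \mathrm{SO}(n)$: if the inputs are $R w_k$ in place of $w_k$, the orthonormal outputs are $R \tilde b_k$ in place of $\tilde b_k$. This follows by induction on $m$ in (\ref{gram-schmidt}) using only the two identities $\langle R u, R v\rangle = \langle u, v\rangle$ and $\|R u\| = \|u\|$, which make every projection and normalization commute with $R$. Applying this with $R = C_i$ and $w_k = q_k^\infty$ gives $b_{i,k}^\infty = C_i \tilde b_k$ for $k = 1,\dots,n-1$, where $\tilde b_k$ is the Gram--Schmidt output of the common data. For the last column I would avoid manipulating $\wedge^{(n-1)}(\mathbb R^n)$ directly and instead characterize $b_{i,n}^\infty$ as the unique unit vector orthogonal to $b_{i,1}^\infty,\dots,b_{i,n-1}^\infty$ with $\det B_i^\infty = +1$. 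Letting $\tilde b_n$ be the analogous completion of $\tilde b_1,\dots,\tilde b_{n-1}$, the candidate $C_i \tilde b_n$ is a unit vector, is orthogonal to each $b_{i,k}^\infty = C_i \tilde b_k$ because $C_i$ preserves inner products, and satisfies $\det[\,b_{i,1}^\infty,\dots,b_{i,n-1}^\infty, C_i \tilde b_n\,] = \det(C_i)\,\det[\,\tilde b_1,\dots,\tilde b_n\,] = +1$ since $\det C_i = +1$. By uniqueness $b_{i,n}^\infty = C_i \tilde b_n$, so the entire column-building map is $C_i$-equivariant. Collecting the columns yields $B_i^\infty = C_i X$ with $X := [\,\tilde b_1,\dots,\tilde b_n\,] \in \mathrm{SO}(n)$, which depends only on the common data $q_k^\infty$ and is therefore identical for all agents. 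Finally, since the map from $(z_{i,1},\dots,z_{i,n-1})$ to $B_i$ is continuous wherever these vectors are independent, the convergence $z_{i,k}(t) \to C_i q_k^\infty$ lifts to $B_i(t) \to C_i X$.

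The main obstacle I expect is the pseudovector step, because the pseudovector is orientation-dependent and transforms covariantly only under \emph{proper} rotations; the determinant characterization above is what lets $\det C_i = +1$ do the work and sidesteps the exterior-algebra bookkeeping. A secondary point requiring care is well-definedness: one must confirm that the common limit vectors $q_1^\infty,\dots,q_{n-1}^\infty$ are linearly independent so that Gram--Schmidt and hence $X$ are defined, which is where the nondegeneracy built into the construction (the initial value staying off $\bold C(L_H \otimes I_n)$) must be invoked.
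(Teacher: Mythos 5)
Your proposal is correct and follows essentially the same route as the paper's proof: transform to the consensus coordinates $z_{i,k} = C_i q_{i,k}$, invoke \emph{Lemma}~\ref{thm_conv} to get a common limit $q_k^\infty$, and show the Gram--Schmidt construction commutes with the proper rotation $C_i$ so that $B_i \to C_i X$ with $X$ built from the common limits. Your determinant-uniqueness argument for the last column and your explicit caveat about linear independence of $q_1^\infty,\dots,q_{n-1}^\infty$ actually tighten two spots the paper leaves implicit (it merely asserts $b_{i,n}$ follows from (\ref{exterior_power}), and note that the condition $\bold q_k(t_0)\notin \bold C(L_H\otimes I_n)$ only rules out each limit being zero, not mutual dependence among them, so independence remains a genericity assumption in both arguments).
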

\begin{proof}
By using the coordinate transformation, we define $q_{i,k}\in \mathbb R^n$ for all $i\in\{1,\dots N\}$ and $k\in \{1,\dots, n-1\}$ such that $z_{i,k} = C_i q_{i,k}$. Let us replace the value of $q_{i,k}$ with $z_{i,k}$ in the procedure (\ref{gram-schmidt}). From the definition of $b_{i,k}$ for all $k$ in (\ref{gram-schmidt}), $v_{i,k}$ is rewritten as follows :
\begin{eqnarray}\label{seq_v}
\begin{array}{rl}
v_{i,1} &= C_i q_{i,1} := C_i x_{i,1}\\
v_{i,2} &= C_i   q_{i,2} - \langle  C_i q_{i,2},  \frac{v_{i,1}}{\|v_{i,1} \|} \rangle \frac{v_{i,1}}{\|v_{i,1} \|} \\
&:= C_i x_{i,2}    \\\
\quad \vdots  \\ 
v_{i,n-1} &= C_i   q_{i,n-1} - \sum_{k=1}^{n-2} \langle C_i q_{i,n-1},  \frac{v_{i,k}}{\|v_{i,k} \|} \rangle \frac{v_{i,k}}{\|v_{i,k}\|}   \\
& := C_i x_{i,n-1}
\end{array}
\end{eqnarray}
Using this to replace $v_{i,m}$ by $x_{i,m}, \forall m \in \{1,\dots n \}$,  $x_{i,m}$ can be written as follows: 
\begin{eqnarray}
\begin{array}{rl}
x_{i,1} &= q_{i,1} \\
x_{i,2} &=    q_{i,2} - \langle  q_{i,2},  \frac{x_{i,1}}{\|x_{i,1} \|} \rangle \frac{x_{i,1}}{\|x_{i,1} \|} \\
\quad \vdots  \\ 
x_{i,n-1} &=   q_{i,n-1} - \sum_{k=1}^{n-2} \langle q_{i,n-1},  \frac{x_{i,k}}{\|x_{i,k} \|} \rangle \frac{x_{i,k}}{\|x_{i,k}\|}   
\end{array}\label{seq_x}
\end{eqnarray}
This is a Gram-Schmidt procedure with respect to the $q_{i,k}, \forall k\in \{1,\dots n-1 \}$.  
   From the result of \emph{Lemma \ref{thm_conv}}, for each $k$, $q_{i,k}$ converges to a finite point $q_k^\infty \in \mathbb R^n $ as $t\rightarrow \infty$ for all $i\in \mathcal V$. This implies that there exists a finite vector $x_k^\infty \in \mathbb R^n$ such that $\{x_{i,1}, x_{i,2}\cdots x_{i,n-1} \} \rightarrow \{x_1^\infty, x_2^\infty, \cdots x_{n-1}^\infty \}$ as $t\rightarrow \infty$ for all $i$. It is clear that $b_{i,k}$ and $x_{i,k}$ have the following relationship: $b_{i,k} = C_i \frac{x_{i,k}}{\| x_{i,k}\|}, \forall k\in\{1,2\dots n-1\}$. We also derive $b_{i,n}$ from (\ref{exterior_power}).
   Therefore, in the steady state of $x_{i,k}$, we have the following solution of $B_i(t)$:
\begin{eqnarray}
\lim_{t\rightarrow \infty} B_i(t) = C_i \left[\frac{x_1^\infty }{ \| x_1^\infty\| } \frac{x_2^\infty }{ \| x_2^\infty \|} \cdots  \frac{x_n^\infty }{\|x_n^\infty \|} \right] , \quad \forall i\in \mathcal V
\end{eqnarray}
It completes the proof. 
\end{proof}

Since $B_i(t) \rightarrow C_i X$ as $t\rightarrow \infty$, it is clear that $B_i(t) \rightarrow C_{ij} B_j(t)$ as $t\rightarrow \infty$. From the result of \emph{Theorem \ref{thm_P}}, we now see that the proposed algorithm provides the solution of the \emph{Problem \ref{problem2}}. Since we transformed the \emph{Problem \ref{problem1}} to \emph{Problem \ref{problem2}}, $B_i$ is the estimated orientation. 

In the following section, we combine this algorithm with the formation control. For a practical reason, we consider the formation control in 3-dimensional spaces instead of the general $n$-dimensional spaces. 

\section{Formation Control In 3-D Space}
In 3-dimensional space, we have to obtain the matrix $B_i\in \mathrm{SO}(3)$ satisfying conditions stated in \emph{Problem \ref{problem2}}. We have two auxiliary variables denoted by $z_{i,1} , z_{i,2}$ for the $i$-th agent. By using the proposed estimation law (\ref{vec_z_model}) and Gram-schmidt process, we can obtain $b_{i,1}$ and $b_{i,2}$. Since the pseudovector is directly calculated by using a cross product of two vectors in 3-dimensional space, the quantity $b_{i,3}$ is as follows :
\begin{eqnarray}
b_{i,3} = b_{i,1} \times b_{i,2},~~~ \forall i\in \mathcal V. 
\end{eqnarray}
Note that $\hat C_i(t) = B_i(t) \in \mathrm{SO}(3)$. 
We now propose a control law for the formation system under the assumption of \emph{Problem} \ref{prob_1}. Consider the following control law:
\begin{eqnarray}
u_i^i = k_u\sum_{j\in \mathcal N_i} l_{ij} ((p^i_j - p_i^i) - \hat C_i(p_j^* - p_i^*)) \label{cont_law}
\end{eqnarray}
where $k_u>0$ and $l_{ij} >0$. Since $u_i = C_i^{-1} u_i^i$, the position dynamics of $i$-th agent with respect to the global reference frame is written as follows:
\begin{eqnarray}
\dot p_i &=& u_i = C_i^{-1}u_i^i \\
&=& k_u \sum_{j\in \mathcal N_i} l_{ij} C_i^{-1}\left((p^i_j - p_i^i) - \hat C_i(p_j^* - p_i^*)\right)\\
&=& k_u \sum_{j\in \mathcal N_i} l_{ij} \left((p_j - p_i) - C_i^{-1}\hat C_i(p_j^* - p_i^*)\right) 
\end{eqnarray}
From the \emph{Theorem} \ref{thm_P}, there exists $C^*\in \mathbb R^{3\times 3}$ such that $C_i^{-1}\hat C_i(t)$ converges to $C^*$ as $t\rightarrow \infty$. By defining ${e}_{i} := p_i- C^* p_i^* $, we obtain the error dynamics as follows: 
\begin{eqnarray}
\dot{{e}}_{i} &=& k_u \sum_{j\in \mathcal N_i} l_{ij}  ({e}_{j} - {e}_{i})\nonumber\\
&&  + \underbrace{k_u \sum_{j\in \mathcal N_i} l_{ij}(C^*- C^{-1}_i \hat C_i)(p_j^* - p_i^*)}_{w_i:=}  \label{err_dynamic}
\end{eqnarray}
Then, (\ref{err_dynamic}) can be rewritten in the vector form as follows:
\begin{eqnarray}\label{cont_err}
\bold{\dot e} = -k_u L\otimes I_3 \bold{e} + \bold w
\end{eqnarray}
where $\bold e = ({e}_1,\dots ,{e}_N)$ and $\bold w = (w_1, \dots , w_N)$. From (\ref{q_conv}) and the result of the \emph{Theorem} \ref{thm_P}, it is clear that $C_i^{-1}\hat C_i$ converges to $C^*$ with an exponential rate. It follows that there exist $k_w>0$ and $\lambda_w >0$ such that 
\begin{eqnarray}
\|\bold w(t)\|  \leq k_w \mathrm{e}^{-\lambda_w(t-t_0)}\| \bold{w}(t_0)\| \label{w_conv}
\end{eqnarray}
Now, we consider the solution of $\bold e(t)$ given by
\begin{eqnarray}
\bold e(t) = \mathrm{e}^{-L\otimes I_3 (t-t_0)} \bold e(t_0) + \int^t_{t_0} \mathrm{e}^{-L\otimes I_3 (t-\tau)} \bold w(\tau) d\tau. 
\end{eqnarray}
By using the result in \emph{Theorem} 3.1 of \cite{Oh:TAC2014}, the following result is obtained. 
\begin{theorem}
Under the estimation law (\ref{vec_z_model}) and the formation control law (\ref{cont_law}), there exists a finite point $ p_\infty$ such that $p_i(t), \forall i\in \mathcal V$ globally exponentially converges to $C^* p_i^*+p_\infty $, if $\mathcal G$ has a rooted-out branch, and $\bold{\hat q}_k(t_0), \forall k\in\{1,2\} $is not in $\rm{C}$$(L_H \otimes I_3)$. \label{thm_formation}
\end{theorem}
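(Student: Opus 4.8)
The plan is to read the closed-loop formation dynamics (\ref{cont_err}) as a consensus system driven by the vanishing input $\bold w$, and to exploit the fact that this input decays exponentially by (\ref{w_conv}). The key structural observation is that the homogeneous matrix $-k_u L\otimes I_3$ is, up to the positive scaling $k_u$, exactly the consensus operator of \emph{Theorem \ref{thm_consensus}}: since $\mathcal G$ has a rooted-out branch, \emph{Theorem \ref{thm_laplacian}} guarantees that $L$ has a simple zero eigenvalue with right eigenvector $\mathbf 1=(1,\dots,1)^T$ and all remaining eigenvalues in the open right half-plane. I would therefore decompose $\bold e$ along this eigenstructure, treating the disagreement directions and the consensus direction separately, in the same spirit as the proof of \emph{Lemma \ref{thm_conv}}.

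Concretely, let $G^{-1} L G=J$ be the Jordan form with the zero eigenvalue placed last, $G=[g_1\cdots g_N]$, $G^{-1}=[w_1^T\cdots w_N^T]^T$, $g_N=\mathbf 1$, and $w_N$ the corresponding left eigenvector. Writing $\boldsymbol\zeta=(G^{-1}\otimes I_3)\bold e$ transforms (\ref{cont_err}) into $\dot{\boldsymbol\zeta}=-k_u (J\otimes I_3)\boldsymbol\zeta+(G^{-1}\otimes I_3)\bold w$, which splits into a ``stable'' block indexed by the nonzero eigenvalues and a single ``marginal'' block $\zeta_N\in\mathbb R^3$ attached to the zero eigenvalue. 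For the stable block, $-k_u J'\otimes I_3$ (with $J'$ the part of $J$ carrying the nonzero eigenvalues) is Hurwitz, since those eigenvalues of $L$ have strictly positive real part and $k_u>0$; this is then a stable LTI system driven by the exponentially decaying input $\bold w$, and a standard convolution estimate shows these coordinates vanish at an exponential rate.

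The only delicate mode is the marginal one, which obeys $\dot\zeta_N=(w_N\otimes I_3)\bold w(t)$, a pure integrator. Here I would use integrability of the input: because $\|\bold w(t)\|$ is bounded by an exponential through (\ref{w_conv}), the integral $\zeta_N(t)=\zeta_N(t_0)+\int_{t_0}^{t}(w_N\otimes I_3)\bold w(\tau)\,d\tau$ converges to a finite limit, with the remaining tail again exponentially small. Denoting this limit $p_\infty$ and transforming back, the vanishing of all stable coordinates yields $\bold e(t)\to (g_N\otimes I_3)p_\infty=\mathbf 1\otimes p_\infty$ at an exponential rate; that is, $e_i(t)=p_i(t)-C^*p_i^*\to p_\infty$ for every $i$, which is precisely $p_i(t)\to C^*p_i^*+p_\infty$. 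Globalness is automatic, since every step is linear and holds for arbitrary initial data, provided the hypothesis $\bold{\hat q}_k(t_0)\notin \mathrm{C}(L_H\otimes I_3)$ is in force so that, via \emph{Lemma \ref{thm_conv}} and \emph{Theorem \ref{thm_P}}, $C_i^{-1}\hat C_i\to C^*$ and hence (\ref{w_conv}) holds.

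I expect the main obstacle to be the marginal (zero-eigenvalue) mode rather than the stable block: along the consensus direction the dynamics merely integrate $\bold w$, so convergence to a \emph{finite} limit hinges entirely on $\bold w$ being integrable, which is exactly what the exponential bound (\ref{w_conv}) buys us. This cascade of an exponentially converging orientation-estimation subsystem into a consensus subsystem that is exponentially stable on the disagreement subspace is precisely the structure packaged by \emph{Theorem} 3.1 of \cite{Oh:TAC2014}, and I would invoke that result to conclude once the input has been shown to decay exponentially.
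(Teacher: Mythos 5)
Your proposal is correct, and it reaches the theorem by a mechanically different route from the paper's, though both rest on the same cascade insight. The paper never diagonalizes $L$ in this proof: it works directly with the variation-of-constants formula for (\ref{cont_err}), bounds the matrix-exponential norm by a constant $\eta$, integrates the exponential envelope (\ref{w_conv}) of $\mathbf{w}$ --- which yields a bound whose convolution term tends to the nonzero constant $\eta k_w \|\mathbf{w}(t_0)\|/\lambda_w$ rather than to zero --- and then kills that residual with the interval-halving substitution $t_0 \mapsto (t+t_0)/2$, the device of Theorem 3.1 of \cite{Oh:TAC2014}. You instead transplant the Jordan-form technique that the paper applies to $L_H$ in the proof of Lemma \ref{thm_conv} onto $L$ itself, splitting (\ref{cont_err}) into a Hurwitz block and the simple zero mode. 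This buys two things the paper leaves implicit: first, the limit is constructed rather than asserted, $p_\infty = \zeta_N(t_0) + \int_{t_0}^{\infty} (w_N \otimes I_3)\,\mathbf{w}(\tau)\, d\tau$, which makes the ``there exists a finite point'' clause transparent --- existence hinges exactly on integrability of $\mathbf{w}$ along the zero mode, which is the content the half-interval trick encodes; second, your stable-block convolution estimate is self-contained (modulo the harmless resonance case where $\lambda_w$ equals the block's decay rate, producing a $t\,\mathrm{e}^{-\lambda t}$ term that is still exponential at any slower rate), so your closing appeal to \cite{Oh:TAC2014} is optional where in the paper it is essential. One caution common to both arguments: (\ref{w_conv}) presupposes that $C_i^{-1}\hat C_i \rightarrow C^*$ at an exponential rate, which in turn requires the Gram--Schmidt limit vectors $x_k^\infty$ in Theorem \ref{thm_P} to be linearly independent so the normalizers do not vanish; you inherit this unstated hypothesis from the paper rather than introduce a new gap.
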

\begin{proof}
Let us define $\mathcal E_e$ as 
\begin{eqnarray}
\mathcal E_e := \{ (x_1 ,\dots,x_N)\in \mathbb R^{3N} : x_i= x_j ,\forall i,j\in \mathcal V\}
\end{eqnarray}
There exists $\bold e^*\in \mathcal E_e$, $k_l>0$ and $\lambda_l >0$ such that 
\begin{eqnarray}
\| \bold e(t) - \bold e^*\| &\leq & \left\| \mathrm{e}^{-L\otimes I_3 (t-t_0)} \bold e(t_0) -\bold e^* \right\| \\
&& + \left\|\int^t_{t_0} \mathrm{e}^{-L\otimes I_3 (t-\tau)} \bold w(\tau) d\tau \right\| \\
& \leq & k_l \mathrm{e}^{-\lambda_l (t-t_0)} \| \bold e (t_0) - \bold e^*\| \\
&&+ \int^t_{t_0} \left\|\mathrm{e}^{-L\otimes I_3 (t-\tau)}\right\|  \left\|\bold w(\tau)  \right\| d\tau \label{err_bound}
\end{eqnarray}
According to the definition of induced norm, $\left\|\mathrm{e}^{-L\otimes I_3 (t-\tau)}\right\|$ is written as 
\begin{eqnarray}
\left\|\mathrm{e}^{-L\otimes I_3 (t-\tau)}\right\| = \max_{\|\bold x\| = 1, t_0\leq \tau \leq t} \left\|\mathrm{e}^{-L\otimes I_3 (t-\tau)} \bold x \right\| \leq \eta \label{matrix_norm}
\end{eqnarray}
where $\eta>0$ is a finite value which bounds the matrix norm. It follows from (\ref{w_conv}) and (\ref{matrix_norm}) that
\begin{eqnarray}
\int^t_{t_0} \left\|\mathrm{e}^{-L\otimes I_3 (t-\tau)}\right\|  \left\|\bold w(\tau)  \right\| d\tau ~~~~~~~~~~~~~~~~~~~~\\
\leq \int^t_{t_0} \eta k_w \mathrm{e}^{-\lambda_w(\tau-t_0)}  \left\| \bold w(t_0) \right\| d\tau ~~~~~~~  \\
= \frac{-\eta k_w}{\lambda_w}  \left[\mathrm{e}^{-\lambda_w(\tau - t_0)} \right]^t_{t_0} \left\| \bold w(t_0) \right\| ~~~~~~~\\
=\frac{-\eta k_w}{\lambda_w} \left\| \bold w(t_0) \right\| (\mathrm{e}^{-\lambda_w(t - t_0)} - 1 ) \label{int_w}~~~~~~
\end{eqnarray}
From (\ref{int_w}), (\ref{err_bound}) can be written as follows:
\begin{eqnarray}
\| \bold e(t) - \bold e^*\| &\leq &  k_l \mathrm{e}^{-\lambda_l (t-t_0)} \| \bold e (t_0) - \bold e^*\| \\
&&-\frac{\eta k_w}{\lambda_w} \left\| \bold w(t_0) \right\| (\mathrm{e}^{-\lambda_w(t - t_0)} - 1 ). \label{err_bound1}
\end{eqnarray} 
By replacing $t_0$ with $t_0' = (t+t_0)/2$ in (\ref{err_bound1}), we obtain
\begin{eqnarray*}
\| \bold e(t) - \bold e^*\| &\leq &  k_l \mathrm{e}^{-\lambda_l \left(\frac{t-t_0}{2}\right)} \left\| \bold e (\frac{t+t_0}{2}) - \bold e^* \right\|\\
&& -\frac{\eta k_w}{\lambda_w} \left\| \bold w(\frac{t+t_0}{2}) \right\| (\mathrm{e}^{-\lambda_w\left(\frac{t-t_0}{2}\right)} - 1 ) \\
& \leq & k_l^2 \mathrm{e}^{-\lambda_l (t-t_0)} \| \bold e (t_0) - \bold e^*\| \\
&&-\frac{\eta k_w^2}{\lambda_w} \mathrm{e}^{-\lambda_w\left(\frac{t-t_0}{2}\right)} \left\| \bold w(t_0) \right\| (\mathrm{e}^{-\lambda_w\frac{t - t_0}{2}} - 1 )
\end{eqnarray*}
This completes the proof. 
\end{proof}
\section{Network Localization in 3-D Space}
Consider the agents under the assumption of \emph{Problem} \ref{prob_2}. We can obtain $\hat C_i$ from the proposed estimation method for the unknown orientation as stated in the previous sections. Then, we design an estimation law of $\hat p_i$ as follows:
\begin{eqnarray}\label{est_rule_p}
\dot{\hat p}_i = k_u \sum_{j\in \mathcal N_i} l_{ij}( \hat p_j - \hat p_i - \hat C_i^T p^i_{ji})
\end{eqnarray}
where $p_{ji}^i$ is a relative displacement defined in (\ref{eq_relative_pos}) and $k_u,l_{ij}>0$. We know that $p^i_{ji} = p^i_j- p^i_i = C_i(p_j - p_i)$. By defining $\tilde p_i$ as $\tilde p_i := \hat p_i - C^*p_i$, we obtain the estimation error dynamics as follows :
\begin{eqnarray}
\dot{\tilde{p}}_i &=& k_u \sum_{j\in \mathcal N_i} l_{ij}( \hat p_j - \hat p_i - \hat C_i^T C_i (p_{j} -p_i)) \nonumber \\
&=&k_u \sum_{j\in \mathcal N_i} l_{ij}\left( \tilde p_j - \tilde p_i\right) \nonumber \\
&& + \underbrace{k_u \sum_{j\in \mathcal N_i}l_{ij} (C^* - \hat C_i^T C_i )(p_{j} -p_i)}_{\psi_i :=} 
\end{eqnarray}
Thus the overall error dynamics for position estimation can be written as 
\begin{eqnarray}\label{est_err}
\dot{\tilde{\bold p}} = -k_u L\otimes I_3 \tilde{\bold{p}} + \Psi
\end{eqnarray}
where $\tilde{\bold p} = (\tilde p_1,\tilde p_2,\dots, \tilde p_N)$ and $\Psi = (\psi_1,\psi_2,\dots \psi_N)$. 
We notice that (\ref{est_err}) corresponds to ($\ref{cont_err}$). We obtain the following corollary based on the relationship. 
\begin{corollary}
Under the orientation estimation law (\ref{vec_z_model}) and the position estimation law (\ref{est_rule_p}), there exists a finite point $ \hat p^\infty_i, \forall i\in \mathcal V$ to which $\hat p_i $ globally exponentially converges, such that $\|\hat p^\infty_j-\hat p^\infty_i\|= \| p_j-p_i\|$, if $\mathcal G$ has a rooted-out branch, and $\bold{\hat q}_k(t_0), \forall k\in\{1,2\} $is not in $\rm{C}$$(L_H \otimes I_3)$. \label{thm_localization}
\end{corollary}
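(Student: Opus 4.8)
The plan is to exploit the exact structural correspondence, already noted in the text, between the position-estimation error dynamics (\ref{est_err}) and the formation error dynamics (\ref{cont_err}): with $\tilde{\bold p}$ playing the role of $\bold e$ and the perturbation $\Psi$ playing the role of $\bold w$, both systems are the linear consensus flow $-k_u L \otimes I_3$ driven by a time-varying forcing term. Since \emph{Theorem \ref{thm_formation}} establishes global exponential convergence for (\ref{cont_err}) purely from (i) the consensus structure of $-k_u L\otimes I_3$ under the rooted-out-branch hypothesis and (ii) the exponential decay (\ref{w_conv}) of the forcing term, I would transfer that conclusion to (\ref{est_err}) essentially verbatim, provided I first verify that $\Psi$ satisfies an analogous exponential bound. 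The corollary's hypotheses on $\mathcal G$ and on $\bold{\hat q}_k(t_0)$ coincide with those of \emph{Theorem \ref{thm_formation}}, so they carry over directly.

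The key step is therefore to show that $\Psi(t)$ decays exponentially. Each $\psi_i = k_u \sum_{j\in\mathcal N_i} l_{ij}(C^* - \hat C_i^T C_i)(p_j - p_i)$ is a fixed linear function of the residual matrix $C^* - \hat C_i^T C_i$, so it suffices to bound this residual. By \emph{Theorem \ref{thm_P}} we have $\hat C_i = B_i \to C_i X$ for a common $X \in \mathrm{SO}(3)$, whence $\hat C_i^T C_i \to X^T C_i^T C_i = X^T$, a single matrix in $\mathrm{SO}(3)$ independent of $i$, which I take as $C^*$ (note this is the transpose of the limit appearing in the formation case, but it is again orthogonal, which is all that is used below). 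Crucially, the convergence of the auxiliary states $\bold q_k$ in (\ref{q_conv}) is exponential by \emph{Lemma \ref{thm_conv}}, and since $\hat C_i^T C_i = B_i^T C_i$ is a fixed linear function of $B_i$, this rate is inherited once $B_i \to C_i X$ exponentially. This yields $\|\Psi(t)\| \le k_\psi \mathrm{e}^{-\lambda_\psi (t-t_0)}\|\Psi(t_0)\|$ for some $k_\psi,\lambda_\psi > 0$, exactly the analogue of (\ref{w_conv}).

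With this bound in hand, I would repeat the variation-of-constants estimate used in the proof of \emph{Theorem \ref{thm_formation}} to conclude that $\tilde{\bold p}$ converges globally exponentially to some $\tilde{\bold p}^* \in \mathcal E_e$, i.e. $\tilde p_i \to \tilde p_j$ for all $i,j$. Translating back through $\tilde p_i = \hat p_i - C^* p_i$, consensus of the $\tilde p_i$ at a common offset $c$ gives $\hat p_i \to \hat p_i^\infty := C^* p_i + c$. Then $\hat p_j^\infty - \hat p_i^\infty = C^*(p_j - p_i)$, and because $C^* \in \mathrm{SO}(3)$ is orthogonal it preserves norms, so $\|\hat p_j^\infty - \hat p_i^\infty\| = \|p_j - p_i\|$, which is the desired conclusion.

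The main obstacle I anticipate is the exponential-decay claim in the second step: propagating the exponential rate from the linear variables $\bold q_k$ through the nonlinear Gram--Schmidt normalization (\ref{gram-schmidt}) and the pseudovector step to the matrix $B_i$, so that the residual is bounded by a true exponential rather than merely converging asymptotically. The normalizations $v_{i,k}/\|v_{i,k}\|$ are only locally Lipschitz away from $v_{i,k}=0$, so I would use the hypothesis $\bold{\hat q}_k(t_0)\notin \mathrm{C}(L_H\otimes I_3)$ to keep the limiting vectors bounded away from zero and from linear dependence, guaranteeing a uniform Lipschitz constant along the trajectory. This is the same fact already invoked (for the formation arrangement $C_i^{-1}\hat C_i$) in \emph{Theorem \ref{thm_formation}}; once it is secured, the remainder of the argument is a direct application of that theorem to (\ref{est_err}).
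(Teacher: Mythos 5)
Your proposal is correct and follows essentially the same route as the paper, which gives no separate proof of the corollary: it simply observes that the localization error dynamics (\ref{est_err}) has the same structure as the formation error dynamics (\ref{cont_err}) and transfers the variation-of-constants argument of \emph{Theorem}~\ref{thm_formation} verbatim, exactly as you do. Your two extra observations---that the limit of $\hat C_i^T C_i$ is the transpose $X^T$ of the matrix arising in the formation case (the paper silently reuses the symbol $C^*$), and that the exponential rate must survive the Gram--Schmidt normalization, which requires the limiting vectors to stay bounded away from zero and linear dependence---are points the paper asserts without justification, so your treatment is, if anything, more careful than the original.
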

\section{Simulation Result}
In this section, we provide simulation results to verify the proposed methods. We consider six agents whose interaction graph is illustrated in Fig. \ref{fig_graph}.
\begin{figure}[!tb]
\begin{center}
\includegraphics[width=0.25\textwidth]{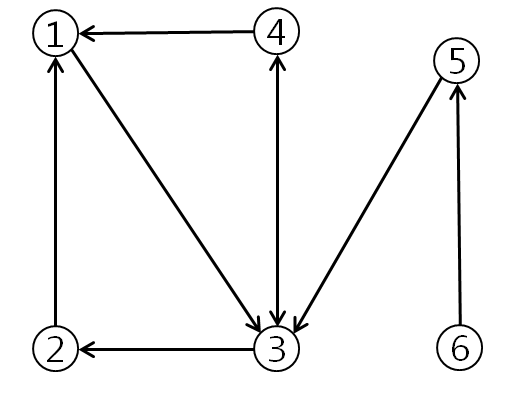}
\caption{Interaction graph of six agents for simulation}
\label{fig_graph}
\end{center}
\end{figure}
Suppose that each agent has its own reference frame which is rotated from the global reference frame by the proper orthogonal matrix $C_i\in \mathbb R^{3\times 3},i=\{1,2,3,4,5,6\}$. For the simulation, the matrix $C_i$ is arbitrary determined. Estimated orientation of $i$-th agent identified by $\hat C_i$ can be obtained by using two auxiliary variables. Under the estimation law (\ref{vec_z_model}), $\hat C_i(t)$ converges to a steady state solution as $t$ goes to infinity. The formation control law (\ref{cont_law}) of $i$-th agent is simultaneously conducted with the estimation of orientation $\hat C_i$. According to the result of the \emph{Theorem} \ref{thm_formation}, the position of $i$-th agent denoted by $p_i$ converges to $C^* p_i^* + p_\infty$, where $C^*\in \mathbb R^{3\times 3}$ is a common rotation matrix obtained by $ C^T \hat C_i(t)$ as $t\rightarrow \infty$, $p_i^*$ is desired position of $i$-th agent, and $p_\infty$ is a common position vector. Fig. \ref{fig_err} shows the result that the measured displacement of neighboring agents converges to the desired displacement with rotational motion. Fig. \ref{fig_formation} depicts that the formation of six agents using the proposed strategy, which includes a combination of orientation estimation and formation control law, converges to the rotated desired shape.

\begin{figure}[!tb]
\begin{center}
\includegraphics[width=0.5\textwidth]{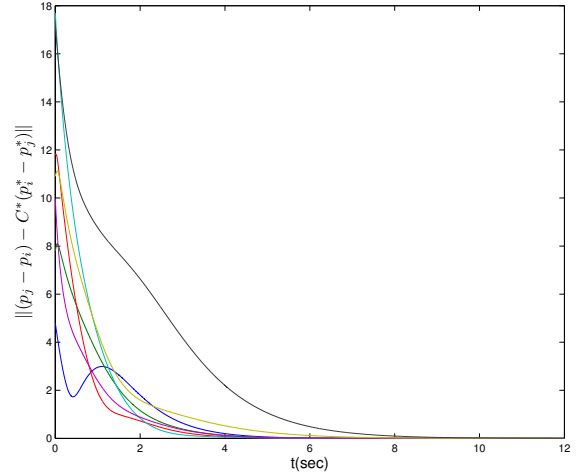}
\caption{Difference of the measured displacement and the desired displacement of neighboring agents. $C^*$ is a common rotation matrix.}
\label{fig_err}
\end{center}
\end{figure}

\begin{figure}[!tb]
\begin{center}
\includegraphics[width=0.55\textwidth]{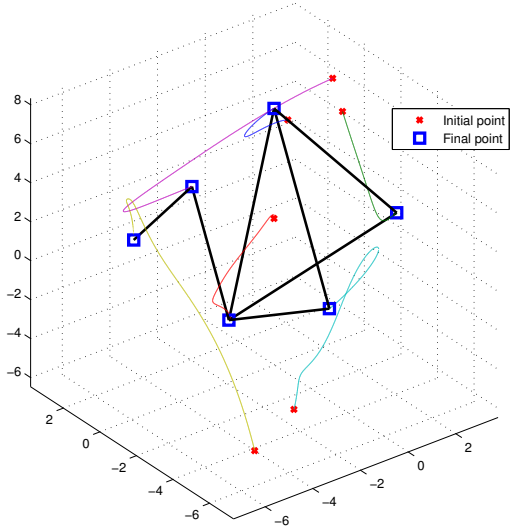}
\caption{Formation control via orientation estimation in 3 dimensional space}
\label{fig_formation}
\end{center}
\end{figure}

\section{Conclusion}
In this paper, we proposed a novel estimation method for unknown orientation of agent with respect to the common reference frame in general $n$-dimensional space. We added virtual auxiliary variables for each agent. These auxiliary variables are corresponding to column vectors of estimated orientation $\hat C_i 
\in \mathrm{SO}(n)$. An estimation law of auxiliary variables is based on the principle of consensus protocol.
 Since configuration space for the proposed method is $n$-dimensional vector space instead of nonlinear space like unit circle, global convergence of auxiliary variables is guaranteed. Consequently, each desired point to which auxiliary variables converge, under the proposed estimation law, is transformed to the estimated orientation. This estimated frame is used to compensate the rotated frame of each agent in the formation control system. Under the formation control law with proposed estimation method, we guarantee a global convergence of multi-agent system to the desired shape in 3-dimensional space, under the distance-based setup. 
 The proposed strategy can also be applied to the localization problem in networked systems. 
 The result remedies the weak point of previous works \cite{Oh:TAC2014,Montijano:ACC2014 } where the range of initial orientations is constrained to $0\leq \theta_i < \pi$. 

%


%
%

%
%

\end{document}